\documentclass[10pt]{article}
\usepackage[utf8]{inputenc}
\usepackage{comment}
\usepackage{graphicx} 
\usepackage{booktabs}
\usepackage[left=1in, bottom=1in, right=1in, top=1in]{geometry}
\usepackage{amsmath, bm}
\usepackage{longtable}
\usepackage{rotating} 
\usepackage{chngcntr}
\usepackage{multirow}
\usepackage{algorithm}
\usepackage{mathrsfs}
\usepackage{xr}
\usepackage{algpseudocode}
\usepackage{amsfonts}
\usepackage[dvipsnames]{xcolor}
\usepackage{enumitem}
\usepackage{wrapfig}
\usepackage{algpseudocode}
\usepackage{pgfplots}
\usepackage{amsthm}
\usepackage{caption}
\usepackage{csquotes}
\usepackage{hyperref}

\newtheorem{theorem}{\noindent Theorem}

\newtheorem{corollary}{\noindent Corollary}

\newtheorem{assumption}{\noindent Assumption}

{}
{}
{}
{}
{\newtheorem{result}{\noindent Result}}
{\newtheorem{remark}{\noindent Remark}}
{}
{}
{}
{ }
{}
{}
{}
{}
{}
{}
{}
{}

\newcommand{\secnd}[1]{\left\{#1 \right\}}

\pgfplotsset{compat = 1.18}
\hypersetup{
	colorlinks=true,
	linkcolor=blue,
	filecolor=magenta,      
	urlcolor=blue,
	citecolor=blue
}

\usepackage{subcaption}
\usepackage{tikz}

\DeclareRobustCommand{\truejump}{%
\tikz[baseline=-0.6ex]
\draw[blue, line width=0.8pt] (0,0) circle (0.7ex);
}

\DeclareRobustCommand{\detectedjump}{%
\tikz[baseline=-0.6ex]
\draw[red, line width=0.9pt]
(-0.7ex,-0.7ex) -- (0.7ex,0.7ex)
(-0.7ex,0.7ex) -- (0.7ex,-0.7ex);
}

\usepackage[citestyle=authoryear-comp,
    maxcitenames=2,
    giveninits=false,
    uniquename=false,
    uniquelist=false,
	ibidtracker=context,
    bibstyle=authoryear,
    dashed=false, 
    date=year,
	sorting=nyt,
    minbibnames=3,
	hyperref=true,
	backref=true,
	citecounter=true,
	citetracker=true,
    natbib=true, 
	backend=biber, 
]{biblatex}

\DeclareFieldFormat{citehyperref}{%
  \DeclareFieldAlias{bibhyperref}{noformat}
  \bibhyperref{#1}}

\DeclareFieldFormat{textcitehyperref}{%
  \DeclareFieldAlias{bibhyperref}{noformat}
  \bibhyperref{%
    #1%
    \ifbool{cbx:parens}
      {\bibcloseparen\global\boolfalse{cbx:parens}}
      {}}}

\savebibmacro{cite}
\savebibmacro{textcite}

\renewbibmacro*{cite}{%
  \printtext[citehyperref]{%
    \restorebibmacro{cite}%
    \usebibmacro{cite}}}

\renewbibmacro*{textcite}{%
  \ifboolexpr{
    ( not test {\iffieldundef{prenote}} and
      test {\ifnumequal{\value{citecount}}{1}} )
    or
    ( not test {\iffieldundef{postnote}} and
      test {\ifnumequal{\value{citecount}}{\value{citetotal}}} )
  }
    {\DeclareFieldAlias{textcitehyperref}{noformat}}
    {}%
  \printtext[textcitehyperref]{%
    \restorebibmacro{textcite}%
    \usebibmacro{textcite}}}

\DeclareAutoCiteCommand{inline}{\textcite}{\parencite}

\title{Asymptotic Separability of Diffusion and Jump Components \\ in High-Frequency CIR and CKLS Models}
\author{Sourojyoti Barick \\ Interdisciplinary Statistical Research Unit, Indian Statistical Institute, Kolkata}       
\begin{document}

\maketitle

\begin{abstract}
This paper develops a robust parametric framework for jump detection in discretely observed CKLS-type jump–diffusion processes with high-frequency asymptotics, based on the minimum density power divergence estimator (MDPDE). The methodology exploits the intrinsic asymptotic scale separation between diffusion increments, which decay at rate $\sqrt{\Delta_n}$, and jump increments, which remain of non-vanishing stochastic magnitude. Using robust MDPDE-based estimators of the drift and diffusion coefficients, we construct standardized residuals whose extremal behavior provides a principled basis for statistical discrimination between continuous and discontinuous components. We establish that, over diffusion intervals, the maximum of the normalized residuals converges to the Gumbel extreme-value distribution, yielding an explicit and asymptotically valid detection threshold. Building on this result, we prove classification consistency of the proposed robust detection procedure: the probability of correctly identifying all jump and diffusion increments converges to one under proper asymptotics. The MDPDE-based normalization attenuates the influence of atypical increments and stabilizes the detection boundary in the presence of discontinuities. Simulation results confirm that robustness improves finite-sample stability and reduces spurious detections without compromising asymptotic validity. The proposed methodology provides a theoretically rigorous and practically resilient robust approach to jump identification in high-frequency stochastic systems.
\end{abstract}
\section{Introduction}

Stochastic differential equations (SDEs) constitute a fundamental mathematical
framework for modeling dynamically evolving systems subject to random
perturbations. In financial econometrics, such models play a central role in the
analysis of interest rate dynamics, asset prices, and volatility processes.
Among the most prominent specifications, the Cox--Ingersoll--Ross (CIR) process,
introduced by \citet{cox1985}, has received considerable attention due to its
ability to capture mean-reverting behavior while preserving strict positivity
of the state variable. The CIR model arises as a special case of the more
general Chan--Karolyi--Longstaff--Sanders (CKLS) model proposed by
\citet{ckls1992}, which provides a flexible parametric family capable of
accommodating empirically observed features such as
state-dependent volatility, and heterogeneous diffusion elasticities.

The empirical relevance of mean–reverting diffusion processes in interest rate modeling is well established. Short-term interest rates exhibit a persistent tendency to revert toward a stochastic equilibrium level, reflecting macroeconomic fundamentals, monetary policy interventions, and market expectations. Such dynamics are naturally represented by continuous-time diffusion models, which provide a parsimonious and analytically tractable description of short-rate evolution. These models form the structural foundation of modern arbitrage-free term structure theory and have been systematically incorporated within general no-arbitrage frameworks; see, for example, \citet{Heath1992} and \citet{DuffieSingleton1993}.  Since derivative valuation, hedging, and risk management depend explicitly on the drift and diffusion coefficients governing the underlying rate dynamics, reliable statistical inference for discretely observed diffusions is essential. This has motivated extensive research on statistically efficient estimation procedures, most notably maximum likelihood estimation (MLE) and conditional least squares (CLS), which provide consistent and asymptotically normal estimators under high-frequency sampling schemes.

Early contributions to the statistical inference of the CIR process include the
work of \citet{Overbeck1997}, who proposed CLS-type estimators and established
their strong consistency under both discrete and continuous observation schemes.
Subsequent investigations, including \citet{benalya2012} and \citet{benalya2013},
demonstrated that likelihood-based methods generally exhibit superior finite-sample
and asymptotic efficiency relative to CLS estimators, particularly under high-frequency
sampling regimes. These studies established consistency, asymptotic normality,
and improved efficiency properties of maximum likelihood estimators, especially
in regimes where the process approaches boundary regions. Extensions to the
more general CKLS framework were provided by \citet{LiMa2015}, who derived
asymptotic properties of CLS estimators under relaxed regularity conditions and
more general diffusion structures. Alternative inferential paradigms, including
Bayesian estimation procedures incorporating prior structural information, were
developed by \citet{DeRossi2010}, offering enhanced inferential stability in
small-sample environments.

Despite their widespread applicability, classical diffusion models rely
fundamentally on the assumption that the underlying stochastic process evolves
continuously over time. However, empirical evidence from high-frequency
financial data unequivocally demonstrates the presence of discontinuities, or
\emph{jumps}, arising from macroeconomic announcements, liquidity shocks,
institutional trading, and other structural market events. These observations
have motivated the development of jump--diffusion models, which augment the
continuous diffusion component with a discontinuous pure-jump process; see,
for example, the seminal contributions of \citet{Merton1976} and
\citet{Duffie2000_AJD}.

From a statistical standpoint, the presence of jumps introduces profound
challenges for inference. Under high frequency asymptotics, diffusion-driven increments
scale at rate $\mathcal{O}(\sqrt{\Delta t})$, whereas jump-induced increments
remain of order $\mathcal{O}(1)$. This fundamental disparity in asymptotic
scaling provides the theoretical basis for statistical discrimination between
continuous and discontinuous components, and has motivated a substantial body
of literature devoted to jump detection and decomposition of semimartingale
dynamics.

Pioneering work by
\cite{Barndorff2004,Barndorff2006} introduced
nonparametric jump detection procedures based on realized variation measures,
exploiting the distinct asymptotic behavior of quadratic and bipower variation
in the presence of jumps. These methods were further developed by
\citet{Jacod2008} and \citet{AitSahaliaJacod2009}, who established rigorous
limit theorems and consistent statistical tests for detecting discontinuities
under general semimartingale models. A particularly influential contribution is
the locally normalized jump detection statistic proposed by
\citet{LeeMykland2008}, which converges weakly to a standard Gaussian
distribution in the absence of jumps while exhibiting divergent behavior in the
presence of discontinuities. This framework provides a statistically tractable
and asymptotically justified mechanism for identifying jump occurrences in
high-frequency observations.

Notwithstanding these advances, most existing jump detection methodologies are
formulated within a fully nonparametric framework and do not explicitly exploit
the structural information inherent in parametric diffusion models. In many
practical applications, parametric models such as the CIR and CKLS processes
remain indispensable due to their interpretability, tractability, and relevance
for structural and forecasting purposes. A common empirical strategy consists
of applying nonparametric jump detection procedures as a preliminary filtering
step, followed by parametric estimation based on the filtered sample. However,
such two-stage procedures introduce additional stochastic variability and are
inherently vulnerable to misclassification errors, which may propagate into the
parameter estimation stage and degrade statistical efficiency.

A central inferential challenge arises from the intrinsic sensitivity of
likelihood-based estimators to contamination by jump-induced increments.
Classical likelihood and quasi-likelihood methods assign quadratic penalties to
large deviations under Gaussian approximations, rendering them highly sensitive
to atypical observations. Consequently, even a relatively small number of jumps
may exert disproportionate influence on the resulting parameter estimates,
leading to bias, inefficiency, and instability in inference. This phenomenon has
been rigorously analyzed in the context of discretely observed diffusions; see,
for example, \citet{Yoshida1988}, \citet{Kessler1997}, and
\citet{UchidaYoshida2012}.

These limitations underscore the necessity of robust statistical methodologies
capable of mitigating the influence of contamination while retaining asymptotic
efficiency under the correctly specified model. A principled and theoretically
well-founded approach is provided by divergence-based estimation methods, which
replace the classical likelihood function with a discrepancy functional between
the empirical and model-implied distributions. In particular, the minimum
density power divergence estimator introduced by \citet{Basu1998} provides a
continuous interpolation between maximum likelihood estimation and highly robust
procedures through a tuning parameter controlling the trade-off between
efficiency and robustness.

Density power divergence estimators possess several desirable theoretical
properties, including consistency, asymptotic normality, and bounded influence
functions, rendering them particularly suitable for inference in the presence of
contamination or model misspecification. Their statistical properties have been
extensively investigated in both independent and dependent data settings; see,
for example, \citet{Ghosh2013} and \citet{LeeSong2013}. In the context of
discretely observed diffusion processes, divergence-based methods provide a
natural and theoretically justified mechanism for attenuating the influence of
jump-induced increments, which manifest as localized deviations from the
continuous diffusion dynamics.

Motivated by these considerations, the present paper develops a unified
parametric framework for robust jump identification and parameter estimation in
CKLS-type diffusion models. The proposed methodology integrates robust
divergence-based estimation with a statistically principled jump identification
mechanism derived from the asymptotic behavior of locally normalized residuals.
This approach provides a coherent inferential framework that simultaneously
achieves robust parameter estimation and consistent jump detection within a
fully parametric setting.

The remainder of the paper is organized as follows. Section~\ref{PrelimTheo}
reviews the parametric CKLS model and summarizes existing results on consistency
and asymptotic normality of classical estimators. Section~\ref{rob_est_param}
introduces the proposed robust estimation framework and establishes its
asymptotic properties. Section~\ref{test_for_jump} develops a statistically
rigorous jump detection procedure and derives the associated asymptotic
critical thresholds. Section~\ref{Simu} presents simulation studies illustrating
the finite-sample performance of the proposed methodology. Concluding remarks
are provided in Section~\ref{concl}. All technical proofs and auxiliary results
are deferred to Appendix~\ref{sec:app}, with additional simulation results provided
in the Supplementary Material.
\section{Theoretical Background and Illustration}\label{PrelimTheo}

This section summarizes the probabilistic framework and classical asymptotic
results that form the benchmark for the robust methodology developed later.
All statements are standard and included only to fix notation and clarify the
reference model. Detailed derivations can be found in the cited literature.

We consider the CKLS diffusion process introduced by
\citet{ckls1992},
\begin{equation}
dX_t = (\beta_1 - \beta_2 X_t)\,dt + \sigma X_t^\gamma\, dW_t,
\label{eqn:ckls}
\end{equation}
where $\gamma \in [1/2,1]$ and $(W_t)$ is a standard Brownian motion.Throughout the paper we impose the following standing conditions.

\begin{assumption}\label{ass:A1}
The parameters satisfy $\beta_1>0$, $\beta_2>0$, and $\sigma>0$, and the
initial condition satisfies $X_0>0$.
\end{assumption}

The ergodic and boundary properties of \eqref{eqn:ckls} are well understood; see
\citet{ckls1992}, \citet{Karlin1981ASC}, and \citet{andersenPit}. Under
Assumption~\ref{ass:A1}, the process admits a unique stationary distribution.
The boundary at zero is unattainable for $\gamma \in (1/2,1]$, while in the
boundary case $\gamma=1/2$ unattainability holds provided $2\beta_1>\sigma^2$.
The stationary density takes the form
\begin{equation}
f(r)=C(\gamma)\,r^{-2\gamma}\exp\{Q(r;\gamma)\},
\label{eqn:stationary}
\end{equation}
where $Q(\cdot;\gamma)$ is determined by the scale and speed measures.

To motivate the estimation framework, consider first the CIR specification
($\gamma=1/2$),
\begin{equation}
dX_t = (\beta_1 - \beta_2 X_t)\,dt + \sigma \sqrt{X_t}\, dW_t,
\label{eqn:cir}
\end{equation}
introduced by \citet{cox1985}. Let $\{X_{t_i}\}_{i=0}^n$ denote observations on
an equidistant grid with mesh $\Delta_n$. The Euler--Maruyama approximation
(\citealp{KloedenPlaten1992}) yields
\[
X_{t+\Delta_n}
=
X_t + (\beta_1-\beta_2 X_t)\Delta_n
+ \sigma \sqrt{X_t}\Delta W_t.
\]

Define
\[
y_t=\frac{X_{t+\Delta_n}-X_t}{\sqrt{X_t\Delta_n}}, \qquad
z_{1t}=\frac{\sqrt{\Delta_n}}{\sqrt{X_t}}, \qquad
z_{2t}=-\sqrt{X_t\Delta_n}.
\]
Then the discretized model admits the linear representation
\begin{equation}
y_t
=
\beta_1 z_{1t}
+
\beta_2 z_{2t}
+
\varepsilon_t,
\qquad
\varepsilon_t=\frac{\sigma}{\sqrt{\Delta_n}}\Delta W_t,
\label{eqn:lin_reg}
\end{equation}
which forms the basis for classical inference in discretely observed
diffusions; see \citet{Kessler1997} and \citet{Dehtiar2022}.
Let $\hat{\boldsymbol{\beta}}_n$ denote the OLS estimator associated with
\eqref{eqn:lin_reg}, and let $\widehat{\sigma}_n^2$ denote the corresponding
residual variance estimator.

We consider the high-frequency asymptotic regime
\begin{equation}
\Delta_n \to 0,
\qquad
n\Delta_n \to \infty,
\label{eqn:hf}
\end{equation}
under which the observation grid becomes increasingly dense as the sample
size grows. This framework corresponds to an \textit{infill} asymptotic scheme, where
the time discretization is progressively refined while the observation horizon
diverges.

Under~\eqref{eqn:hf}, standard martingale arguments imply consistency and
asymptotic normality of the estimators.
\begin{result}[Consistency]\label{lem:consistency}
\(\hat{\boldsymbol{\beta}}_n \xrightarrow{\mathbb{P}} \boldsymbol{\beta},
\qquad
\widehat{\sigma}_n^2 \xrightarrow{\mathbb{P}} \sigma^2.
\)
\end{result}

\begin{result}[Asymptotic normality]\label{lem:asymp}
\(\frac{\sqrt{n\Delta_n}}{\widehat{\sigma}_n}
\left(
\hat{\boldsymbol{\beta}}_n-\boldsymbol{\beta}
\right)
\xrightarrow{\mathcal{D}}
\mathcal{N}\!\left(0,\Sigma^{-1}\right),
\)
where $\Sigma$ depends on moments of the stationary distribution; see
\citet{Kessler1997}.
\end{result}

For the CIR case,
\[
\Sigma =
\begin{pmatrix}
\frac{\beta_2}{\beta_1 - \sigma^2/2} & -1 \\
-1 & \frac{\beta_1}{\beta_2}
\end{pmatrix},
\]
while for the general CKLS model,
\[
\Sigma =
\begin{pmatrix}
\int_0^\infty r^{-2\gamma} f(r)\,dr &
-\int_0^\infty r^{1-2\gamma} f(r)\,dr \\[6pt]
-\int_0^\infty r^{1-2\gamma} f(r)\,dr &
\int_0^\infty r^{2-2\gamma} f(r)\,dr
\end{pmatrix}.
\]

Under infill asymptotics, both the ordinary least squares estimator and the Gaussian
quasi–maximum likelihood estimator are consistent for the same true parameter
vector; see \citet{Dehtiar2022} and \citet{Dehtiar2022_CKLS}. Since both
estimators converge in probability to the identical deterministic limit,
their difference must vanish asymptotically. Consequently, the preceding
results apply equally to the likelihood-based estimator.

The preceding asymptotic analysis is derived under the Gaussian increment
structure implied by the pure diffusion specification. This structure ensures
local quadratic approximation of the likelihood and yields the standard
quasi-likelihood estimating equations. In empirical applications, however,
high-frequency financial series often display excess kurtosis and abrupt
movements incompatible with the Gaussian benchmark. Since least squares and
Gaussian quasi-likelihood estimators weight all increments equally, their
finite-sample behavior may be adversely affected by atypically large
increments. This consideration motivates an alternative estimating framework that
retains consistency under the diffusion model while reducing the influence
of extreme observations.

\subsection{Robust Estimation via Density Power Divergence}\label{rob_est_param}

To formalize robustness, we replace the Gaussian quasi-likelihood criterion
by a density power divergence objective in the sense of \citet{Basu1998}.
Let $f_{\boldsymbol{\theta}}$ denote the working Gaussian increment density.
For $\alpha \ge 0$, the density power divergence between the empirical
distribution and $f_{\boldsymbol{\theta}}$ yields the estimating criterion
\[
\mathcal{L}_n^{(\alpha)}(\boldsymbol{\theta})
=
\frac{1}{n}
\sum_{i=1}^n
\left\{
\int f_{\boldsymbol{\theta}}^{1+\alpha}(x)\,dx
-
\left(1+\frac{1}{\alpha}\right)
f_{\boldsymbol{\theta}}^{\alpha}(X_i)+\frac{1}{\alpha}
\right\},
\]
with the usual likelihood recovered at $\alpha=0$.

The resulting minimum density power divergence estimator (MDPDE)
downweights increments with small model-implied density and therefore
limits the influence of extreme observations while preserving
consistency under correct specification.

Let $\theta = (\beta_1,\beta_2,\sigma)^\top$ denote the vector of unknown
parameters, and consider the Gaussian conditional density associated with the
Euler discretization of the diffusion process,
\[
f_\theta(y_t \mid X_t)
=
\frac{1}{\sqrt{2\pi}\sigma}
\exp\!\left(
-\frac{(y_t - \beta_1 z_{1t} - \beta_2 z_{2t})^2}{2\sigma^2}
\right),
\]
where $X_t$ denotes the available conditioning information at time $t$ and
$(z_{1t},z_{2t})$ are the corresponding regressors. This Gaussian approximation
serves as a working model for the conditional distribution of the increments and
constitutes the basis for both classical and robust inference.

For a robustness tuning parameter $\alpha > 0$, the empirical density power
divergence objective function is defined as
\[
H_n^{(\alpha)}(\theta)
=
\sum_{t=1}^n
\left[
\int f_\theta^{1+\alpha}(y \mid X_t)\,dy
-
\frac{1+\alpha}{\alpha}
f_\theta^{\alpha}(y_t \mid X_t)+\frac{1}{\alpha}
\right].
\]
The first term acts as a normalization component that depends only on the model,
while the second term downweights observations that are unlikely under the
postulated conditional density. As $\alpha \downarrow 0$, the criterion
$H_n^{(\alpha)}(\theta)$ converges to the negative log-likelihood, and the
corresponding estimator reduces to the classical maximum likelihood estimator.

The MDPDE is then defined as
\[
\hat{\theta}_n^{(\alpha)}
=
\arg\min_{\theta \in \Theta} H_n^{(\alpha)}(\theta),
\]
where $\Theta$ denotes the parameter space. The tuning parameter $\alpha$
controls the trade-off between robustness and efficiency, with larger values of
$\alpha$ yielding increased resistance to outliers at the cost of some loss in
efficiency under the correctly specified Gaussian model.

\begin{result}[Asymptotic properties of the MDPDE]
\label{lem:mdpde}
Suppose the regularity conditions of \citet{LeeSong2013} hold and the sampling
scheme satisfies the infill asymptotic regime \eqref{eqn:hf}. If in addition, \(
n(\Delta_n)^q \to 0
\quad \text{for some } q>1,
\)
holds, then the minimum density power divergence estimator satisfies
\[
\hat{\theta}_n^{(\alpha)} \xrightarrow{\mathbb{P}} \theta_0,
\]
where $\theta_0$ denotes the true parameter value. Furthermore, if \(
n(\Delta_n)^2 \to 0,
\)
the estimator is asymptotically normal with covariance matrix given explicitly
in \citet{LeeSong2013}.
\end{result}

\begin{remark}
A fundamental property of the minimum density power divergence estimator is its
bounded influence function, which ensures stability in the presence of atypically
large observations. In discretely observed diffusion models, unusually large
increments may arise from unmodeled jump components or mild departures from the
pure diffusion specification. Unlike classical likelihood-based estimators,
whose influence functions are unbounded, the MDPDE automatically downweights such
increments and prevents them from dominating the estimating equations. As a
result, systematic discrepancies between classical and robust estimators provide
a natural diagnostic signal of structural misspecification, including the presence
of jumps.
\end{remark}
This distinction is particularly relevant in high-frequency settings. Diffusion
sample paths are almost surely continuous, and their increments vanish at the
rate $\sqrt{\Delta_n}$ as $\Delta_n \to 0$, whereas jump-induced increments remain
of finite magnitude. Consequently, jump increments appear as local outliers
relative to the continuous diffusion dynamics and may exert disproportionate
influence on likelihood-based procedures, often leading to biased or unstable
estimates. Robust divergence-based estimators mitigate this effect by limiting
the contribution of extreme observations and thereby preserving stability and
interpretability of the fitted model.

An additional advantage arises in constrained parametric models, where extreme
increments may drive likelihood-based estimates toward the boundary of the
parameter space, potentially violating structural conditions such as positivity
or ergodicity. By attenuating the influence of atypical observations, robust
estimation maintains numerical stability and ensures that the estimated parameters
remain within economically and statistically meaningful regimes.
\begin{figure}[htbp]
    \centering
    \begin{subfigure}[t]{0.90\linewidth}
        \centering
        \includegraphics[width=0.9\linewidth]{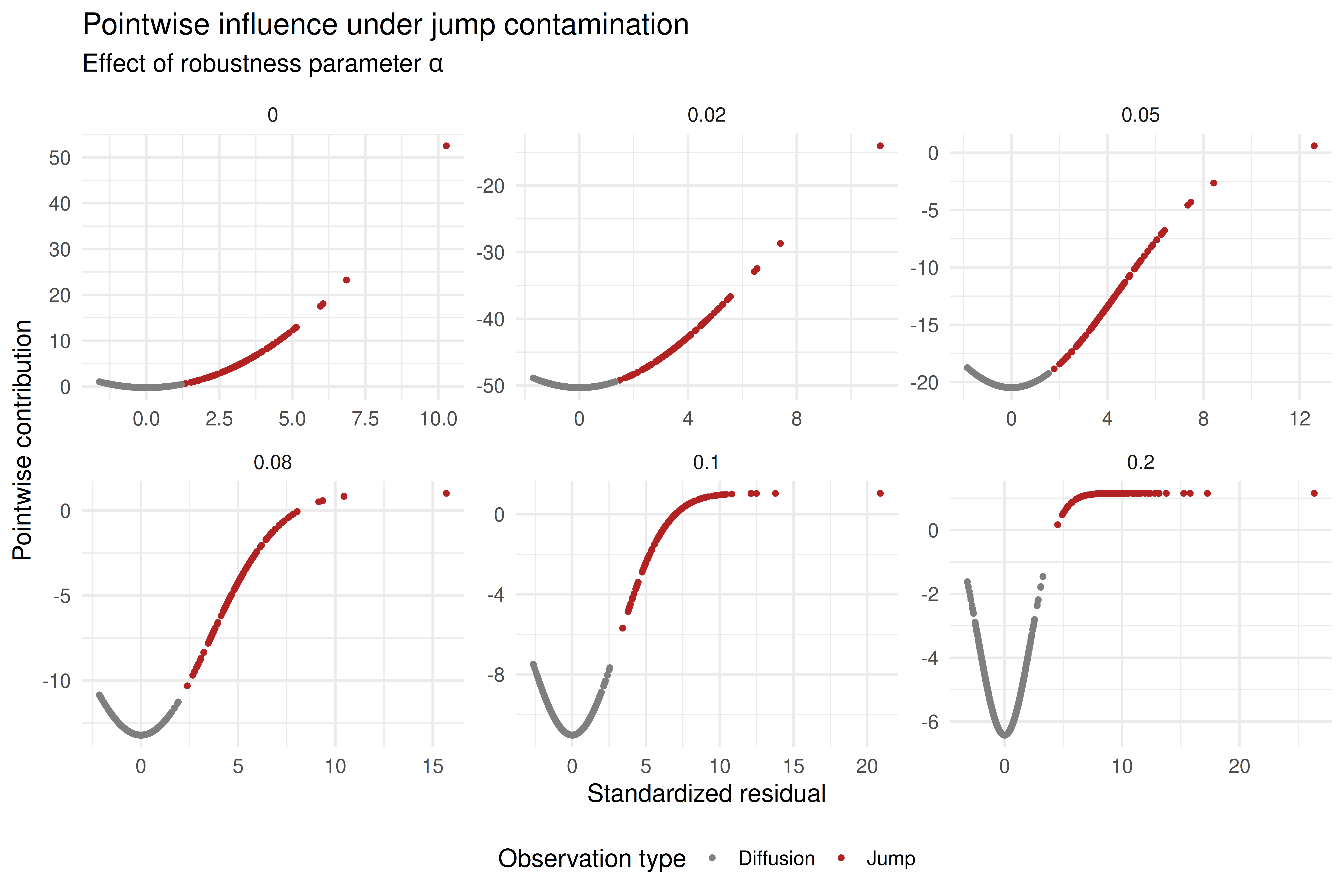}
        \caption{Pointwise contribution (influence) of standardized increments under jump contamination for different values of the robustness parameter $\alpha$. Diffusion-driven increments are shown in gray, while jump-induced increments are shown in red.}
        \label{fig:pointwise_influence}
    \end{subfigure}

    \vspace{0.4cm}

    \begin{subfigure}[t]{0.9\linewidth}
        \centering
        \includegraphics[width=0.9\linewidth]{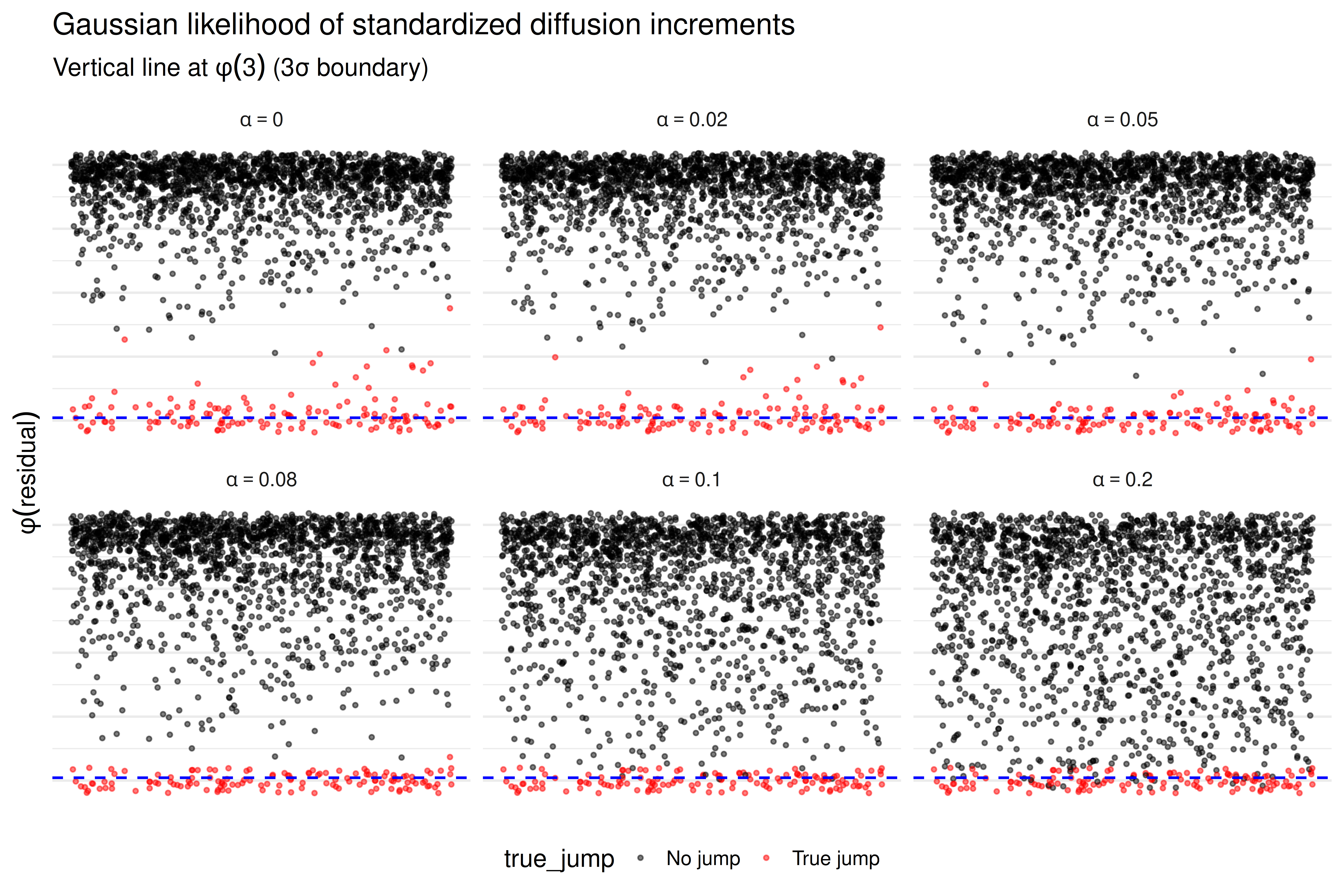}
        \caption{Gaussian likelihood values $\varphi(Residuals)$ of standardized increments with a horizontal reference line at $\varphi(3)$, corresponding to the classical $3\sigma$ boundary, shown for different values of the robustness parameter $\alpha$.}
        \label{fig:likelihood_scatter}
    \end{subfigure}

    \caption{Effect of robustness on pointwise influence and likelihood contributions under jump contamination. Increasing $\alpha$ progressively bounds the influence of large, jump-induced increments while preserving the contribution of diffusion-driven observations.}
    \label{fig:robustness_jump_diagnostics}
\end{figure}
Motivated by these considerations, we consider the CKLS diffusion augmented by a
jump component. Specifically, for known elasticity parameter $\gamma$, we study
the jump--diffusion model
\begin{equation}
\label{eqn:jump_ckls}
dX_t
=
(\beta_1 - \beta_2 X_t)\,dt
+
\sigma X_t^{\gamma}\,dW_t
+
dJ_t,
\end{equation}
where $\{J_t\}_{t \ge 0}$ is a pure-jump process independent of $W_t$. Such models
are widely used in interest rate and volatility modeling to capture abrupt market
movements. The classical Cox--Ingersoll--Ross model arises as the special case
$\gamma = 1/2$.

Figures~\ref{fig:pointwise_influence} and~\ref{fig:likelihood_scatter} illustrate
the effect of robustness on the pointwise contribution of standardized increments.
When $\alpha = 0$, corresponding to Gaussian likelihood estimation, the influence
of extreme increments is unbounded, and jump observations may dominate the
estimating equations. As the robustness parameter $\alpha$ increases, the influence
function becomes bounded, and the contribution of jump-induced increments is
progressively attenuated, while diffusion-driven increments remain largely
unaffected. This behavior reflects the intrinsic robustness of the MDPDE and its
ability to isolate the continuous diffusion structure from jump contamination.

Taken together, these considerations highlight a fundamental advantage of
divergence-based estimation in high-frequency diffusion models. By exploiting the
distinct scaling behavior of diffusion and jump increments, the MDPDE provides
stable parameter estimates and offers a principled basis for detecting departures
from the pure diffusion assumption. This diagnostic capability will play a central
role in the development of the robust parametric jump detection methodology
presented in subsequent sections.
\section{Proposed Test for Jump}\label{test_for_jump}


Let $\{X_{t_i}\}_{i=0}^n$ denote discrete observations of the process introduced
above equation~\eqref{eqn:jump_ckls}, recorded at equidistant times $t_i = i\Delta_n$ over a fixed time horizon,
where $\Delta_n \to 0$. Denote the increments by
\[
\Delta_i^n X := X_{t_i} - X_{t_{i-1}}
\]
and the jump increment be defined as
\begin{equation}
\Delta_i^n J = J_{t_i} - J_{t_{i-1}},
\end{equation}
where the presence of a jump at index $i$ corresponds to the event $\Delta_i J \neq 0$.
Let $\hat{\bm\theta}_n := (\hat\beta_{1n}, \hat\beta_{2n}, \hat\sigma_n)$ denote
consistent estimators of the model parameters satisfying
\[
\hat\beta_{1n} \xrightarrow{P} \beta_1,
\qquad
\hat\beta_{2n} \xrightarrow{P} \beta_2,
\qquad
\hat\sigma_n \xrightarrow{P} \sigma,
\qquad
\Pr(\hat\sigma_n > 0) \to 1.
\]

We define the normalized increment statistic
\begin{equation}
\label{eq:LM_stat}
Z_{i,n}
=
\frac{
\Delta_i^n X
-
(\hat\beta_{1n} - \hat\beta_{2n} X_{t_{i-1}})\Delta_n
}{
\hat\sigma_n X_{t_{i-1}}^\gamma \sqrt{\Delta_n}
}.
\end{equation}

This normalization removes the estimated drift and rescales the increment by the
estimated local diffusion magnitude. Consequently, increments generated by the
continuous martingale component are brought to a unit Gaussian scale, whereas
jump-induced increments remain asymptotically separated due to their larger order.
This separation forms the foundation of the parametric jump detection procedure
developed in the sequel.


The following theorem establishes the asymptotic decomposition of the normalized
increment statistic into its continuous martingale and jump components.

\begin{theorem}
\label{thm:parametric_LM}
Suppose the parameter estimators are consistent. Then, for each fixed index $i$,
\begin{equation}
\label{eq:LM_expansion}
Z_{i,n}
=
\frac{\Delta_i^n W}{\sqrt{\Delta_n}}
+
\frac{\Delta_i^n J}{\sigma X_{t_{i-1}}^\gamma \sqrt{\Delta_n}}
+
o_p(1).
\end{equation}
Consequently, if $\Delta_i^n J = 0$, then
\[
Z_{i,n}
\xrightarrow{\mathcal L}
\mathcal N(0,1),
\]
whereas if $\Delta_i^n J \neq 0$ and the jump magnitude is nondegenerate,
\[
|Z_{i,n}|
\xrightarrow{P}
\infty.
\]
\end{theorem}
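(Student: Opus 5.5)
We start from the integral form of \eqref{eqn:jump_ckls} over $[t_{i-1},t_i]$,
\[
\Delta_i^n X=\int_{t_{i-1}}^{t_i}(\beta_1-\beta_2X_s)\,ds+\sigma\int_{t_{i-1}}^{t_i}X_s^\gamma\,dW_s+\Delta_i^n J,
\]
and substitute it into the numerator of \eqref{eq:LM_stat}. This splits the numerator into four terms: the drift error
\[
D_{i,n}=\int_{t_{i-1}}^{t_i}(\beta_1-\beta_2X_s)\,ds-(\hat\beta_{1n}-\hat\beta_{2n}X_{t_{i-1}})\Delta_n,
\]
the frozen martingale term $\sigma X_{t_{i-1}}^\gamma\Delta_i^nW$, the local-volatility error
\[
R_{i,n}=\sigma\int_{t_{i-1}}^{t_i}(X_s^\gamma-X_{t_{i-1}}^\gamma)\,dW_s,
\]
and the jump $\Delta_i^nJ$. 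Dividing by $\sigma X_{t_{i-1}}^\gamma\sqrt{\Delta_n}$ and using $\hat\sigma_n\to\sigma$ in probability, the factor $\sigma/\hat\sigma_n=1+o_p(1)$ multiplies everything. We absorb it through Slutsky-type arguments, noting that $\Delta_i^nW/\sqrt{\Delta_n}=O_p(1)$.

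The remainders are handled as follows.

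\textbf{Drift term.} Since $X$ is càdlàg and $X_{t_{i-1}}>0$ almost surely (Assumption~\ref{ass:A1}, and unattainability of zero), $\sup_{s\in[t_{i-1},t_i]}|X_s|=O_p(1)$. Hence $D_{i,n}=O_p(\Delta_n)$, also using consistency of $\hat\beta_{jn}$, so $D_{i,n}/\sqrt{\Delta_n}=O_p(\sqrt{\Delta_n})=o_p(1)$. Dividing by $X_{t_{i-1}}^\gamma$ is harmless because $X_{t_{i-1}}^{-\gamma}=O_p(1)$.

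\textbf{Local-volatility term.} We localize by a stopping time $\tau_K$ keeping $X$ in $[1/K,K]$ and apply the Itô isometry:
\[
\mathbb E\bigl[R_{i,n}^2\mathbf 1_{\{\tau_K>t_i\}}\bigr]\le\sigma^2\int_{t_{i-1}}^{t_i}\mathbb E\bigl[(X_{s\wedge\tau_K}^\gamma-X_{t_{i-1}}^\gamma)^2\bigr]\,ds.
\]
The integrand tends to zero by right-continuity of $X$ at $t_{i-1}$ and dominated convergence, since $x^\gamma$ is Lipschitz on $[1/K,K]$. This gives $R_{i,n}=o_p(\sqrt{\Delta_n})$. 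Letting $K\to\infty$ then removes the localization.

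\textbf{Main obstacle.} This is the delicate step, because with a jump inside the interval $X_s-X_{t_{i-1}}$ does not vanish. However, a jump in a shrinking interval occurs with probability $\to0$ for any fixed time point. More carefully, the integrand difference is bounded, and the measure of $s$ after the jump within an interval of length $\Delta_n$ still yields $o(\Delta_n)$ in the integral, so the bound still gives $o_p(\sqrt{\Delta_n})$. Collecting the terms yields \eqref{eq:LM_expansion}.

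For the consequences: $\Delta_i^nW/\sqrt{\Delta_n}\sim\mathcal N(0,1)$ exactly, for every $n$. So when $\Delta_i^nJ=0$, Slutsky's lemma gives $Z_{i,n}\xrightarrow{\mathcal L}\mathcal N(0,1)$. When $\Delta_i^nJ\ne0$ with nondegenerate magnitude, meaning $|\Delta_i^nJ|$ is bounded below in probability (does not shrink to zero), we have $|\Delta_i^nJ|/(\sigma X_{t_{i-1}}^\gamma)$ bounded away from zero in probability. Here we use $X_{t_{i-1}}^\gamma=O_p(1)$. Dividing by $\sqrt{\Delta_n}\to0$ makes this term diverge, while the other two terms are $O_p(1)$. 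The reverse triangle inequality then gives $|Z_{i,n}|\xrightarrow{P}\infty$.
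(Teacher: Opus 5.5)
Your decomposition follows the paper's route: integrate over $(t_{i-1},t_i]$, freeze the coefficients at $t_{i-1}$, remove the estimated drift by consistency, and divide by $\hat\sigma_nX_{t_{i-1}}^\gamma\sqrt{\Delta_n}$. On jump-free intervals, your localization and It\^o-isometry control of $R_{i,n}$ is more careful than the paper's one-line It\^o--Taylor remainder.

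The step that fails is your ``main obstacle'' paragraph. Suppose the jump occurs at $\tau\in(t_{i-1},t_i]$. On $(\tau,t_i]$ the integrand $X_s^\gamma-X_{t_{i-1}}^\gamma$ stays close to $(X_{t_{i-1}}+\Delta_i^nJ)^\gamma-X_{t_{i-1}}^\gamma$, which does not shrink. Moreover, $t_i-\tau$ is a non-vanishing fraction of $\Delta_n$; for a compound Poisson jump it is conditionally uniform on $(0,\Delta_n)$. So the isometry bound is of exact order $\Delta_n$, not $o(\Delta_n)$. Indeed,
\[
\frac{R_{i,n}}{\sigma X_{t_{i-1}}^\gamma\sqrt{\Delta_n}}
=\Bigl[\Bigl(1+\tfrac{\Delta_i^nJ}{X_{t_{i-1}}}\Bigr)^{\gamma}-1\Bigr]\frac{W_{t_i}-W_\tau}{\sqrt{\Delta_n}}+o_p(1),
\]
which is $O_p(1)$ and nondegenerate. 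Hence \eqref{eq:LM_expansion} with an $o_p(1)$ remainder is actually false on jump intervals.

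The same problem affects your Slutsky step. The factor $\sigma/\hat\sigma_n-1=o_p(1)$ multiplies the jump term, which is of order $\Delta_n^{-1/2}$. It cannot be absorbed additively without the rate $\hat\sigma_n-\sigma=o_p(\sqrt{\Delta_n})$; your remark covers only the Brownian term. Nor does the observation that a jump in a shrinking interval has vanishing probability help. It yields an unconditional $o_p(1)$, which carries no information on the event $\{\Delta_i^nJ\neq0\}$, since that event itself has probability $O(\Delta_n)$ for fixed $i$ and compound Poisson $J$.

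The repair is to claim the $o_p(1)$ expansion only on $\{\Delta_i^nJ=0\}$. There your argument works and gives the $\mathcal N(0,1)$ limit. On $\{\Delta_i^nJ\neq0\}$, settle for
\[
Z_{i,n}=\frac{\sigma}{\hat\sigma_n}\Bigl(\frac{\Delta_i^nW}{\sqrt{\Delta_n}}+\frac{\Delta_i^nJ}{\sigma X_{t_{i-1}}^\gamma\sqrt{\Delta_n}}+O_p(1)\Bigr),
\]
where the $O_p(1)$ comes from your isometry bound with a bounded localized integrand. This still gives $|Z_{i,n}|\xrightarrow{P}\infty$, because the jump term is of order $\Delta_n^{-1/2}$ and $\sigma/\hat\sigma_n$ is bounded away from zero.

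That last point, like your drift bound, needs $\hat\beta_{1n},\hat\beta_{2n},\hat\sigma_n$ to remain stochastically bounded under the law conditioned on the jump event. Plain consistency does not guarantee this for an event of vanishing probability, so state it as an assumption or derive it from the structure of the estimator.

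The paper's own proof has the same blind spot. Its $O_p(\Delta_n)$ It\^o--Taylor remainder and its final division step are valid only when the interval contains no jump.
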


Theorem~\ref{thm:parametric_LM} formalizes the pointwise asymptotic separation
between the continuous and discontinuous components. After normalization,
diffusion-driven increments remain stochastically bounded, while jump-induced
increments diverge at rate $\Delta_n^{-1/2}$.

We next characterize the maximal fluctuation of the statistic over intervals
free of jumps, which determines the natural detection boundary.

\begin{theorem}
\label{thm:max_LM}
Let
\[
\mathcal C_n
=
\{ i \in \{1,\dots,n\} : \Delta_i^n J = 0 \}.
\]
Under consistency of the parameter estimators,
\[
\max_{i \in \mathcal C_n}
|Z_{i,n}|
=
\sqrt{2\log(n)}
+
O_p(1),
\]
and more precisely,
\[
\frac{
\max_{i \in \mathcal C_n} |Z_{i,n}| - a_n
}{b_n}
\Rightarrow
\Lambda,
\]
where
\[
a_n
=
\sqrt{2\log n}
-
\frac{\log\log n + \log(\pi)}{2\sqrt{2\log n}},
\qquad
b_n
=
\frac{1}{\sqrt{2\log n}},
\]
and $\Lambda$ denotes the standard Gumbel distribution.
\end{theorem}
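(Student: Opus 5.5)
The argument decomposes each jump-free statistic into an exact Gaussian part and a remainder, proves the remainder is uniformly $o_p((\log n)^{-1/2})$, and then applies the classical Gaussian extreme-value theorem.

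\textbf{Decomposition.} For $i\in\mathcal C_n$ the increment contains no jump part. Hence
\[
\Delta_i^n X=\int_{t_{i-1}}^{t_i}(\beta_1-\beta_2X_s)\,ds+\sigma\int_{t_{i-1}}^{t_i}X_s^\gamma\,dW_s .
\]
Substituting this into \eqref{eq:LM_stat} gives $Z_{i,n}=\xi_{i,n}+R_{i,n}$, where $\xi_{i,n}=\Delta_i^nW/\sqrt{\Delta_n}$. The $\xi_{i,n}$ are i.i.d.\ $\mathcal N(0,1)$ and independent of $J$; this independence is what lets us treat $\mathcal C_n$ as a (random) index set of cardinality $n-N_n$. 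The jump count $N_n$ is $O_p(1)$ over the fixed horizon for a finite-activity $J$.

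The remainder $R_{i,n}$ has three pieces:
\begin{itemize}
\item[(a)] the scale error $(\sigma/\hat\sigma_n-1)\xi_{i,n}$;
\item[(b)] the drift mismatch $\Delta_n^{-1/2}\bigl[\int(\beta_1-\beta_2X_s)\,ds-(\hat\beta_{1n}-\hat\beta_{2n}X_{t_{i-1}})\Delta_n\bigr]\big/\bigl(\hat\sigma_nX_{t_{i-1}}^\gamma\bigr)$;
\item[(c)] the frozen-coefficient error $\frac{\sigma}{\hat\sigma_n}\Delta_n^{-1/2}\int_{t_{i-1}}^{t_i}\bigl(X_s^\gamma/X_{t_{i-1}}^\gamma-1\bigr)\,dW_s$.
\end{itemize}

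\textbf{Reduction to a uniform bound.} The goal is $\max_{i\le n}|R_{i,n}|=o_p(1/\sqrt{\log n})$. Given this,
\[
\Bigl|\max_{\mathcal C_n}|Z_{i,n}|-\max_{\mathcal C_n}|\xi_{i,n}|\Bigr|\le\max_i|R_{i,n}|,
\]
which after dividing by $b_n$ is $o_p(1)$. The problem then reduces to the maximum of $m_n=n-N_n$ i.i.d.\ half-normal variables.

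\textbf{Bounding the remainder.} Since $X$ is continuous, positive and bounded away from $0$ and $\infty$ on the compact horizon, the quantities $\sup_s X_s$ and $\inf_s X_s^{-1}$ are $O_p(1)$.
\begin{itemize}
\item[(b)] This term is bounded by $\sqrt{\Delta_n}\,O_p(1)$. It uses only $\hat\beta_n-\beta=o_p(1)$ and Lipschitz control of the drift, so it is $o_p(1/\sqrt{\log n})$ as soon as $\Delta_n\log n\to0$.
\item[(a)] This term is $|\sigma/\hat\sigma_n-1|\max_i|\xi_{i,n}|=|\sigma/\hat\sigma_n-1|\,O_p(\sqrt{\log n})$. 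It therefore requires the rate $\hat\sigma_n-\sigma=o_p(1/\log n)$, not mere consistency. I would supply this from Result~\ref{lem:mdpde}: the MDPDE scale estimator is $\sqrt{n}$-type consistent in the relevant regime, with the jump increments downweighted. This rate is the one point where the bare hypothesis ``consistent'' must be strengthened, and I would state it explicitly.
\item[(c)] Work on localization events where $X$ stays in a compact subset of $(0,\infty)$. The local modulus of continuity of $X$ is $O(\sqrt{\Delta_n\log n})$. A Bernstein or BDG exponential inequality for each stochastic integral then gives $|(c)_i|\lesssim\sqrt{\Delta_n\log n}\cdot\sqrt{\log n}$ with probability $1-n^{-2}$. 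A union bound over $n$ indices yields $\max_i|(c)_i|=O_p(\sqrt{\Delta_n}\log n)$, which is $o_p(1/\sqrt{\log n})$ provided $\Delta_n(\log n)^3\to0$.
\end{itemize}
This uniform-in-$i$ control of (c), together with the rate needed in (a), is the main obstacle. The first statement, $\max|Z_{i,n}|=\sqrt{2\log n}+O_p(1)$, follows more cheaply from the same bounds.

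\textbf{Gumbel limit.} For i.i.d.\ $|\xi|$ we have $P(|\xi|>x)=2\bar\Phi(x)$. The normalizing constants for $2n$ effective Gaussian tails give
\[
P\Bigl(\max_{i\le m}|\xi_i|\le a_m+b_mx\Bigr)\to\exp(-e^{-x}).
\]
The $\log\pi$ term in $a_n$ arises because the usual $\log4\pi$ is adjusted by $\log 2$ twice, from the doubled tail.

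To pass from the random count $m_n$ to $n$, condition on $J$, which is independent of $W$. Since $N_n=O_p(1)$, we have $a_{m_n}-a_n=O(N_n/(n\sqrt{\log n}))$, which is negligible relative to $b_n$. Slutsky's lemma then completes the proof.
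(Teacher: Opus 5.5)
Your proof has the same skeleton as the paper's: replace $Z_{i,n}$ on $\mathcal C_n$ by i.i.d.\ standard normals, pass from $\mathcal C_n$ to $\{1,\dots,n\}$ because only a negligible fraction of indices carry jumps, and apply the half-normal Gumbel limit. You carry out the approximation step differently, though, and the difference is substantive.

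The paper uses the Euler residual $(\Delta_i^nX-b(X_{t_{i-1}})\Delta_n)/(\sigma(X_{t_{i-1}})\sqrt{\Delta_n})$ as its Gaussian proxy and treats it as exactly i.i.d.\ $\mathcal N(0,1)$. It then asserts only a uniform $o_p(1)$ error, derived from bare consistency. Your exact proxy $\Delta_i^nW/\sqrt{\Delta_n}$, together with the frozen-coefficient term (c), repairs the first point. Your target $\max_{i\in\mathcal C_n}|R_{i,n}|=o_p((\log n)^{-1/2})$ repairs the second: an $o_p(1)$ error is multiplied by $1/b_n=\sqrt{2\log n}$ and does not give the Gumbel limit.

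The rate $\hat\sigma_n-\sigma=o_p(1/\log n)$ that you impose is genuinely necessary, not an artifact of your method. Take a $\sqrt n$-consistent scale estimator and multiply it by $1+1/\log\log n$. It stays consistent, yet both $\max_{\mathcal C_n}|Z_{i,n}|-\sqrt{2\log n}$ and $(\max_{\mathcal C_n}|Z_{i,n}|-a_n)/b_n$ tend to $-\infty$ in probability. So neither displayed claim holds under the hypothesis as stated; the first claim needs $\hat\sigma_n-\sigma=O_p((\log n)^{-1/2})$. Your constants are also right: with the stated $a_n$ one has $2n[1-\Phi(a_n+b_nx)]\to e^{-x}$, whereas the paper's proof writes $n[1-\Phi(u_n(x))]\to e^{-x}$, which is off by a factor of two.

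A few small repairs are needed.
\begin{itemize}
\item Take the remainder maximum over $i\in\mathcal C_n$ only, not over all $i\le n$. On jump intervals $Z_{i,n}-\xi_{i,n}$ is not small. The modulus-of-continuity step in (c) uses that $X$ is continuous on jump-free intervals; $X$ is not continuous globally.
\item Term (b) needs only $\hat\beta_{1n},\hat\beta_{2n}=O_p(1)$. This matters because your $O_p(1)$ bounds on $N_n$, $\sup_sX_s$ and $(\inf_sX_s)^{-1}$ rely on the fixed horizon of Section~\ref{test_for_jump}, and on a fixed horizon the drift parameters cannot be consistently estimated at all.
\item For the same reason, do not attribute the scale rate to Result~\ref{lem:mdpde}. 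That result concerns the pure diffusion under $n\Delta_n\to\infty$ and gives no rate under jump contamination. State the rate as an explicit added hypothesis, as you already propose.
\item If the horizon instead grows, as in the simulations with $\Delta_n=n^{-0.55}$, those localization quantities diverge, and your bounds on (b) and (c) must absorb their growth.
\end{itemize}
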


This extreme-value characterization determines the intrinsic growth rate of the
continuous-path maximum and provides a principled basis for threshold selection.

\begin{corollary}
\label{cor:uniform_separation}
Under the same assumptions,
\[
\Pr\!\left(
\max_{i \in \mathcal C_n} |Z_{i,n}|
<
\sqrt{2\log(n)}
<
\min_{i:\Delta_i^n J \neq 0} |Z_{i,n}|
\right)
\to 1.
\]
\end{corollary}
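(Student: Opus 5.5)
The corollary splits into two one-sided events, and it suffices to show each has probability tending to one; the intersection then does too. On the continuous side, $\max_{i\in\mathcal C_n}|Z_{i,n}|<\sqrt{2\log n}$; on the jump side, $\min_{i:\Delta_i^nJ\neq0}|Z_{i,n}|>\sqrt{2\log n}$.

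For the continuous side I would use the Gumbel limit of Theorem~\ref{thm:max_LM} directly. Write $M_n=\max_{i\in\mathcal C_n}|Z_{i,n}|$. Since $\sqrt{2\log n}-a_n=\frac{\log\log n+\log\pi}{2\sqrt{2\log n}}$, we get
\[
\frac{\sqrt{2\log n}-a_n}{b_n}=\tfrac12\left(\log\log n+\log\pi\right)\to\infty .
\]
Hence, for every fixed $x$, eventually $\Pr(M_n\ge \sqrt{2\log n})\le \Pr((M_n-a_n)/b_n\ge x)\to 1-\Lambda(x)$. Letting $x\to\infty$ gives $\Pr(M_n<\sqrt{2\log n})\to1$.

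This is the delicate point: the margin is only $\tfrac12\log\log n$ on the Gumbel scale. The cruder statement $M_n=\sqrt{2\log n}+O_p(1)$ would \emph{not} suffice, so the refined centering $a_n$ is essential.

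For the jump side I would use the expansion~\eqref{eq:LM_expansion} of Theorem~\ref{thm:parametric_LM}, but made uniform over jump indices. I would assume, as is implicit in the model, that $J$ has finitely many jumps on the observation window, for instance compound Poisson. Then the set $\mathcal J_n=\{i:\Delta_i^nJ\neq0\}$ has $O_p(1)$ cardinality.

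For $i\in\mathcal J_n$, write
\[
|Z_{i,n}|\ge \frac{|\Delta_i^nJ|}{\hat\sigma_n X_{t_{i-1}}^\gamma\sqrt{\Delta_n}}-\frac{|\Delta_i^nW|}{\sqrt{\Delta_n}}\frac{\sigma}{\hat\sigma_n}-R_{i,n},
\]
where $R_{i,n}$ collects the drift-estimation error. Each piece is controlled as follows.
\begin{itemize}
\item The drift error satisfies $|R_{i,n}|\le C\sqrt{\Delta_n}\,(1+\sup_t X_t)/(\hat\sigma_n\inf_t X_t^\gamma)=o_p(1)$. This uses path continuity plus positivity: $\sup_t X_t$ and $\inf_t X_t$ over the window are $O_p(1)$ and bounded away from zero in probability.
\item The Brownian term satisfies $\max_{i\le n}|\Delta_i^nW|/\sqrt{\Delta_n}=O_p(\sqrt{\log n})$.
\item The leading jump term is at least $c\,\Delta_n^{-1/2}$ with probability close to one. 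This requires $\min_{i\in\mathcal J_n}|\Delta_i^nJ|$ to be bounded away from zero in probability, which is the ``nondegenerate jump'' condition, together with the fact that, with probability tending to one, no two jumps of opposite sign fall in the same interval.
\end{itemize}
Consequently
\[
\min_{i\in\mathcal J_n}|Z_{i,n}|\ge c\,\Delta_n^{-1/2}-O_p(\sqrt{\log n}).
\]

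The main obstacle is the rate comparison: we need $\Delta_n^{-1/2}\gg\sqrt{\log n}$, i.e.\ $\Delta_n\log n\to0$. This holds automatically when $\Delta_n\sim T/n$ on a fixed horizon. Under~\eqref{eqn:hf} it also holds, since $n\Delta_n\to\infty$ and the regime of Result~\ref{lem:mdpde} give polynomial decay of $\Delta_n$; I would state this condition explicitly. Given it, $\Pr(\min_{i\in\mathcal J_n}|Z_{i,n}|>\sqrt{2\log n})\to1$, and combining the two sides completes the proof. If $\mathcal J_n$ is empty, the minimum is $+\infty$ by convention and the claim is trivial.
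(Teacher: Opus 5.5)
Your argument is correct in the fixed-horizon setting described in Section~\ref{test_for_jump}, and it is more careful than the paper. The paper gives no separate proof of this corollary. It treats the result as a direct read-off from two facts: the display $\max_{i\in\mathcal C_n}|Z_{i,n}|=\sqrt{2\log n}+O_p(1)$ in Theorem~\ref{thm:max_LM}, and the pointwise divergence in Theorem~\ref{thm:parametric_LM}. You correctly see that neither suffices in that form:
\begin{itemize}
\item The $O_p(1)$ statement cannot give a strict inequality at exactly $\sqrt{2\log n}$. You instead exploit the centering $a_n$, which leaves the diverging margin $\tfrac12(\log\log n+\log\pi)$ on the Gumbel scale.
\item Theorem~\ref{thm:parametric_LM} concerns a fixed index only. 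You make it uniform over the random set $\mathcal J_n$ via finite activity, pathwise bounds on $X$, nondegenerate jump sizes and $\Delta_n\log n\to0$. That rate condition cannot be avoided, because the threshold itself grows like $\sqrt{\log n}$.
\end{itemize}
The paper's route is shorter but leaves these hypotheses implicit; yours makes visible what the corollary actually uses. Your jump-side bound also needs only $\hat\beta_{1n},\hat\beta_{2n}=O_p(1)$. This is fortunate, since drift parameters are not consistently estimable on a fixed horizon.

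Three points to tighten.

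\emph{First}, on a jump interval the diffusion part is $\int_{t_{i-1}}^{t_i}\sigma X_s^\gamma\,dW_s$, not $\sigma X_{t_{i-1}}^\gamma\Delta_i^nW$, because $X_s$ moves by $O(1)$ after the jump. The discrepancy is $O_p(\sqrt{\Delta_n})$, the same order as the Brownian term, and your $R_{i,n}$ does not contain it. It is harmless: localizing on $\{\sup_tX_t\le K\}$ and using an exponential martingale bound gives the same $O_p(\sqrt{\log n})$ control after normalization.

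\emph{Second}, several of your ingredients hold only on a fixed horizon: $|\mathcal J_n|=O_p(1)$, $\sup_tX_t=O_p(1)$, $1/\inf_tX_t=O_p(1)$, and $\min_{i\in\mathcal J_n}|\Delta_i^nJ|$ bounded away from zero. Under the long-span regime \eqref{eqn:hf} (needed for drift consistency, and used in the simulations with $\Delta_n=n^{-0.55}$), these fail:
\begin{itemize}
\item the number of jumps grows like $\lambda n\Delta_n$;
\item $\sup_tX_t$ grows;
\item with Gaussian jump sizes, $\min_k|Y_k|\to0$.
\end{itemize}
Your remark about \eqref{eqn:hf} therefore secures only $\Delta_n\log n\to0$, which comes from $n\Delta_n^q\to0$. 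Extending the argument would need an explicit condition such as $\min_{i\in\mathcal J_n}|\Delta_i^nJ|/(\sqrt{\Delta_n\log n}\,\sup_tX_t^\gamma)\to\infty$ in probability.

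\emph{Third}, on the original scale your margin is only $(\log\log n+\log\pi)/(2\sqrt{2\log n})\to0$. The continuous side is therefore exactly as strong as the Gumbel limit you import. Write $\hat\sigma_n=\sigma(1+\epsilon_n)$. What is really required is $2\epsilon_n\log n+\tfrac12\log\log n\to\infty$ in probability, e.g.\ $\epsilon_n=o_p(\log\log n/\log n)$, and bare consistency of $\hat\sigma_n$ does not deliver this. Citing Theorem~\ref{thm:max_LM} is legitimate, but this is the rate requirement hidden behind it.
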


Corollary~\ref{cor:uniform_separation} establishes uniform asymptotic separation
between the continuous and jump components, ensuring that a threshold of order
$\sqrt{2\log(1/\Delta_n)}$ asymptotically discriminates between the two regimes.

\begin{theorem}[Consistency of parametric jump detection]
\label{thm:robust_detection_consistency}
Suppose the assumptions of Theorems~\ref{thm:parametric_LM},
\ref{thm:max_LM}, and the consistency of the estimators hold. Then,

\[
\Pr\!\left(
\max_{i \in \mathcal C_n}
|Z_{i,n}| > \xi_n
\right)
\to 0,
\]

and, for any index $i$ such that $\Delta_i^n J \neq 0$,
\[
\Pr\!\left(
|Z_{i,n}| \le \xi_n
\right)
\to 0.
\]

Consequently,
\[
\Pr(\text{correct classification of all increments})
\to 1.
\]
\end{theorem}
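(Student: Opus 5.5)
The threshold $\xi_n$ is not specified in the statement, so I will take it to satisfy the natural growth condition
\[
\xi_n - a_n \to \infty, \qquad \xi_n\sqrt{\Delta_n}\to 0 .
\]
For example, $\xi_n = a_n + c_n b_n$ with $c_n\to\infty$ slowly, or simply $\xi_n=\sqrt{2\log n}\,(1+\epsilon_n)$ with $\epsilon_n\downarrow 0$ slowly. The two displayed probability statements are then handled separately, and the final claim follows from a union bound.

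\emph{False detections.} By Theorem~\ref{thm:max_LM}, $T_n:=(\max_{i\in\mathcal C_n}|Z_{i,n}|-a_n)/b_n\Rightarrow\Lambda$, so $T_n=O_p(1)$. Then
\[
\Pr\Bigl(\max_{i\in\mathcal C_n}|Z_{i,n}|>\xi_n\Bigr)=\Pr\bigl(T_n>(\xi_n-a_n)/b_n\bigr).
\]
Since $(\xi_n-a_n)/b_n\to\infty$, tightness of $T_n$ sends this probability to $0$. If instead one insists on $\xi_n=\sqrt{2\log n}$, I would invoke Corollary~\ref{cor:uniform_separation} directly. In that case I note that $(\sqrt{2\log n}-a_n)/b_n=\tfrac12(\log\log n+\log\pi)\to\infty$, so the same argument applies.

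\emph{Missed jumps.} Fix $i$ with $\Delta_i^nJ\neq0$. By the expansion \eqref{eq:LM_expansion} of Theorem~\ref{thm:parametric_LM},
\[
\sqrt{\Delta_n}\,|Z_{i,n}| \ge \frac{|\Delta_i^nJ|}{\sigma X_{t_{i-1}}^\gamma} - \sqrt{\Delta_n}\,O_p(1),
\]
where $O_p(1)$ absorbs the Gaussian term $\Delta_i^nW/\sqrt{\Delta_n}$ and the $o_p(1)$ remainder. The first term on the right is strictly positive and, by nondegeneracy of the jump size and positivity and local boundedness of $X$, bounded away from zero in probability. Because $\xi_n\sqrt{\Delta_n}\to0$, it follows that $\Pr(|Z_{i,n}|\le\xi_n)=\Pr(\sqrt{\Delta_n}|Z_{i,n}|\le\xi_n\sqrt{\Delta_n})\to0$.

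\emph{All increments.} The event of incorrect classification is contained in $\{\max_{\mathcal C_n}|Z_{i,n}|>\xi_n\}\cup\bigcup_{i:\Delta_i^nJ\ne0}\{|Z_{i,n}|\le\xi_n\}$. The number $N_n$ of jump intervals must be controlled. Over a fixed horizon with a finite-activity (compound Poisson type) $J$, $N_n\le N_T<\infty$ a.s. I would therefore condition on $N_T$, or intersect with $\{N_T\le K\}$ and let $K\to\infty$, and apply the pointwise result to finitely many indices. The main obstacle is precisely this step: the pointwise $o_p(1)$ in Theorem~\ref{thm:parametric_LM} must hold jointly over the random, $n$-dependent set of jump indices. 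I would handle it by showing that the estimation error term satisfies $\max_i|\cdot|\le C|\hat\theta_n-\theta|(\sqrt{\Delta_n}\max_i X^{1-\gamma}+1)\cdot$(bounded factors), uniformly in $i$. This holds because $\min_{t\le T}X_t>0$ and $\max_{t\le T}X_t<\infty$ a.s. Consequently $\min_{i:\Delta_i^nJ\ne0}\sqrt{\Delta_n}|Z_{i,n}|\ge\min_k|J_{\tau_k}-J_{\tau_k-}|/(\sigma\sup X^\gamma)-o_p(1)$. This quantity is bounded away from $0$ since there are finitely many jumps, each of nonzero size.
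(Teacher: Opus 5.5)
Your proposal is correct and follows the same route as the paper: Theorem~\ref{thm:max_LM} controls false detections, the $\Delta_n^{-1/2}$ divergence in Theorem~\ref{thm:parametric_LM} controls missed jumps, and the two are combined over all increments. It is tighter than the paper's own argument in three places. Requiring $(\xi_n-a_n)/b_n\to\infty$ makes the false-detection probability actually vanish, whereas the paper's fixed-level threshold $a_n+b_nG^{-1}(1-q)$ only yields $q+o(1)$. The condition $\xi_n\sqrt{\Delta_n}\to0$ supplies the rate comparison the paper leaves implicit. Your truncation on $\{N_T\le K\}$ handles the uniformity over the random jump set, which the paper only asserts. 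One small correction: what you use, and what your example $a_n+c_nb_n$ satisfies, is $(\xi_n-a_n)/b_n\to\infty$, not your stated condition $\xi_n-a_n\to\infty$.
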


Reliable separation requires stable estimation of the diffusion scale parameter.
Classical likelihood-based estimators may be severely distorted by jump-induced
contamination, compromising the normalization. To address this issue, we employ
robust estimators based on the minimum density power divergence.

Let $\hat{\bm\theta}_n^\alpha :=
(\hat\beta_{1n}^\alpha,\hat\beta_{2n}^\alpha,\hat\sigma_n^\alpha)$ denote the robust
estimators, and define the corresponding normalized statistic
\begin{equation}
\label{eq:parm_LM_stat_rob}
Z^\alpha_{i,n}
=
\frac{
\Delta_i^n X
-
(\hat\beta_{1n}^\alpha - \hat\beta_{2n}^\alpha X_{t_{i-1}})\Delta_n
}{
\hat\sigma_n^\alpha X_{t_{i-1}}^\gamma \sqrt{\Delta_n}
}.
\end{equation}

By consistency of the robust estimators, the asymptotic expansion
\eqref{eq:LM_expansion} and the extreme-value result of
Theorem~\ref{thm:max_LM} remain valid for $Z^\alpha_{i,n}$, while providing
enhanced stability under contamination.

We formulate jump identification at index $i$ as the testing problem
\[
H_{0,i} : \Delta_i^n J = 0,
\qquad
H_{1,i} : \Delta_i^n J \neq 0.
\]

Under $H_{0,i}$,
\[
Z^\alpha_{i,n}
\Rightarrow
\mathcal N(0,1),
\]
whereas under $H_{1,i}$ with nonvanishing jump magnitude,
\[
|Z^\alpha_{i,n}|
\xrightarrow{P}
\infty.
\]

Thus, the continuous and discontinuous regimes become asymptotically
distinguishable through the magnitude of the standardized statistic.
Motivated by Theorem~\ref{thm:max_LM}, we introduce the detection threshold
\[
\xi_n
=
\sqrt{2\log n}+c_n,
\qquad
c_n \to \infty,
\]
and declare a jump at time $t_i$ whenever
\begin{equation}
\label{eq:critical_region_annals}
|Z^\alpha_{i,n}|>\xi_n .
\end{equation}

Define the true and estimated jump index sets by
\begin{align}\label{eqn:index_true_est}
    \mathcal J_n
=
\left\{
i\in\{1,\ldots,n\}:\Delta_i^n J\neq 0
\right\},
\qquad
\hat{\mathcal J}_n
=
\left\{
i\in\{1,\ldots,n\}:|Z^\alpha_{i,n}|>\xi_n
\right\}.
\end{align}

Combining the consistency of the robust parameter estimator with
Theorem~\ref{thm:robust_detection_consistency} implies
\[
\Pr\!\left(\hat{\mathcal J}_n=\mathcal J_n\right)\to 1 .
\]
Consequently, conditional on consistent identification of the jump indices,
the jump magnitude at detected times is estimated by
\[
\Delta_i^n\hat J
=
\Delta_i^n X
-
\bigl(\hat\beta_{1n}^{\alpha}
-
\hat\beta_{2n}^{\alpha} X_{t_{i-1}}\bigr)\Delta_n,
\qquad i \in \hat{\mathcal J}_n .
\]

The preceding result establishes asymptotic separation between the
continuous diffusion component and the jump component.
Specifically, the extreme--value growth of the maximal standardized
diffusion increment determines the appropriate detection boundary,
whereas the divergence of the statistic in the presence of jumps ensures
consistent identification of discontinuities.
The robust parametric normalization further stabilizes scale estimation,
thereby enhancing reliability under jump contamination and mild model
misspecification.

\section{Simulation Study}\label{Simu}
We conduct a simulation study to examine the finite-sample behavior of the proposed robust parameter estimation and jump detection procedure. The data generating process is given by \eqref{eqn:jump_ckls}. The jump component $(J_t)$ is specified as a compound Poisson process
\begin{equation}
J_t
=
\sum_{k=1}^{N_t} Y_k,
\end{equation}
where $(N_t)$ is a Poisson process with intensity $\lambda$, and the jump sizes $(Y_k)$ are independent and identically distributed with
\[
Y_k \sim \mathcal{N}(\mu_J,\sigma_J^2).
\]
The process is observed at equidistant time points with sampling interval
\(
\Delta_n = n^{-0.55},
\)
corresponding to a high-frequency asymptotic regime in which
\(\Delta_n \to 0\) as \(n \to \infty\).
The underlying diffusion parameters are set to
\(\beta_1 = 1\), \(\beta_2 = 0.8\), \(\sigma = 0.3\), and \(\gamma = 0.7\).

The simulation study considers the following configurations for the
sample size and jump component parameters:
\begin{align}\label{eqn:parm_set}
n \in \{200,500,1000,1500,2000\}, \quad
\lambda \in \{1,2,3,5\}, \quad
\mu_J \in \{1,2,3,4,5\}, \quad
\sigma_J = 0.1,
\end{align}
in order to evaluate the performance of the proposed robust procedure
for jump isolation and detection.

We first examine performance through a visual comparison between true and
detected jump locations. For illustration, a representative trajectory is
generated under the configuration $(n,\lambda,\mu_J)=(1000,5,3)$.
Using this fixed sample path, jump classification is performed for
robustness parameters
\[
\alpha \in \{0,0.05,0.1,0.15,0.2,0.25\},
\]
where parameter estimation is carried out via the robust procedure and
the resulting estimates are used to classify increments as jump or
diffusion driven. The case $\alpha=0$ corresponds to the classical
least-squares estimator, providing a non-robust benchmark.

Figure~\ref{fig:true_vs_detected} presents the jump detection results for the
values of the robustness parameter $\alpha$ specified above. Rather than
displaying the sample path itself, we plot the first-order increments
$(\Delta_i^n X)$ to highlight discontinuities directly. The detection
threshold obtained from Theorem~\ref{thm:max_LM} is
$\xi_n = 3.512$. True jump times are indicated by blue markers, while
detected jumps based on the thresholded standardized statistic are shown
in red.

When $\alpha=0$, corresponding to the classical OLS estimator, the
procedure fails to identify several clearly separated jumps. As $\alpha$
increases, detection improves substantially, reflecting the stabilizing
effect of robust normalization. The detection rate increases rapidly for
small positive values of $\alpha$ and stabilizes around
$\alpha = 0.15$, beyond which all true jumps are consistently
identified.

These results illustrate the role of the robustness parameter in
controlling the trade-off between sensitivity and stability. Small values
of $\alpha$ leave the normalization susceptible to diffusion-driven
extreme increments, leading to missed or unstable detections, whereas
moderate values attenuate this influence and reduce spurious
classifications while preserving detection power.

\begin{figure}[!ht]
    \centering
    \includegraphics[width=\linewidth]{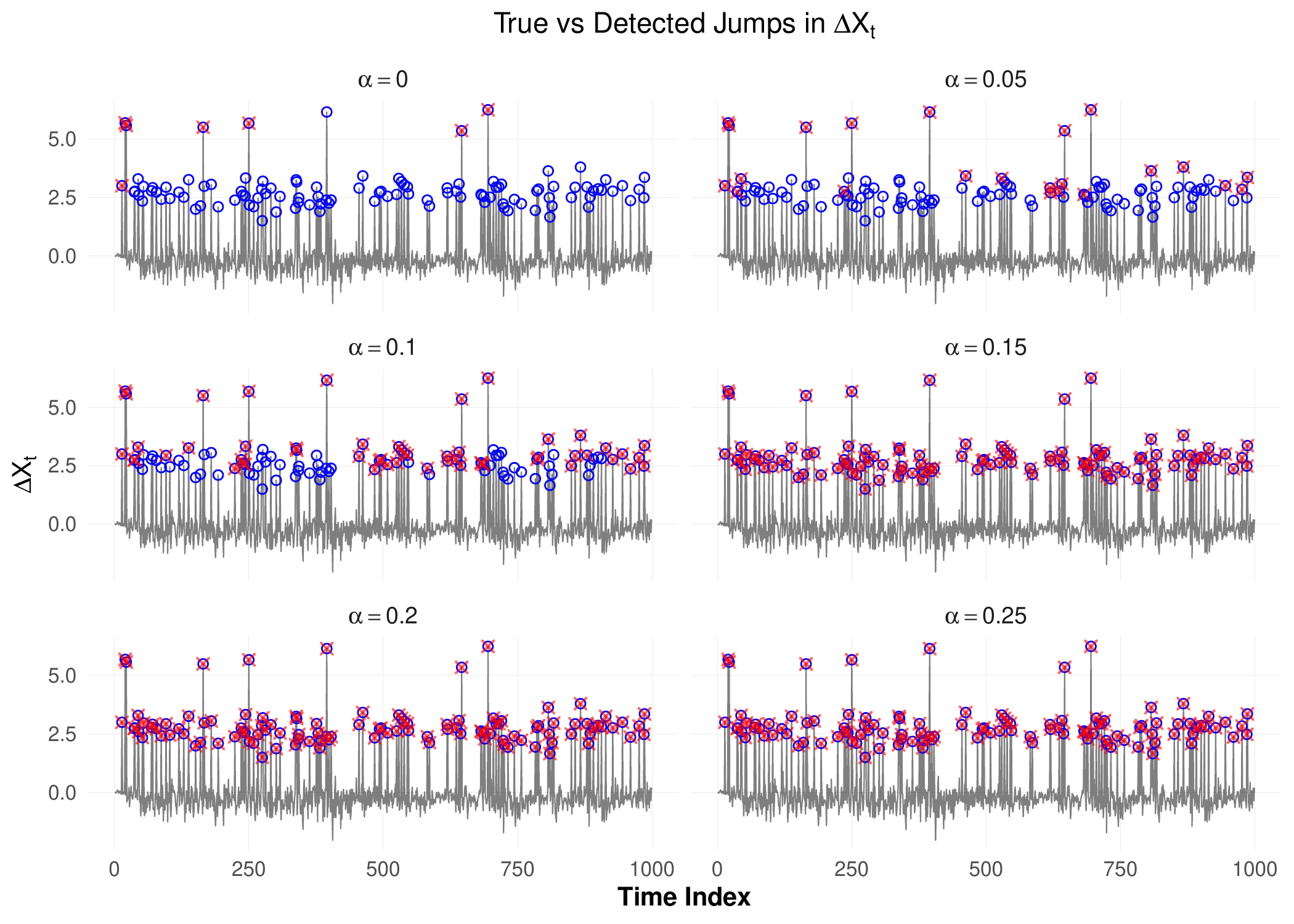}
    \caption{Visual comparison of true and detected jumps in the increment process $\Delta_i^n X$ for different values of the robustness parameter $\alpha$. 
\protect\truejump\ denotes true jumps, whereas \protect\detectedjump\ denotes jumps detected by the proposed procedure.}
    \label{fig:true_vs_detected}
\end{figure}

To provide a more comprehensive assessment of the proposed procedure,
we summarize performance over the parameter configurations specified in
\eqref{eqn:parm_set} for robustness levels
$\alpha \in \secnd{0,0.05,0.1,\dots,0.5}$.
Performance is evaluated using classification accuracy and parameter
recovery metrics, namely the F1-score and a scaled estimation error.

Let $\mathcal J_n$ and $\hat{\mathcal J}_n$ denote the true and estimated
jump sets, respectively, as defined in \eqref{eqn:index_true_est}.
We define the classification counts
\begin{align*}
\mathrm{TP}
&=
|\mathcal J_n \cap \hat{\mathcal J}_n|,
\qquad
\mathrm{FN}
=
|\mathcal J_n \setminus \hat{\mathcal J}_n|,
\qquad
\mathrm{FP}
=
|\hat{\mathcal J}_n \setminus \mathcal J_n|.
\end{align*}
The corresponding performance measures are
\begin{align*}
\text{Precision}
&=
\frac{\mathrm{TP}}{\mathrm{TP}+\mathrm{FP}},
\qquad
\text{Recall}
=
\frac{\mathrm{TP}}{\mathrm{TP}+\mathrm{FN}},\\
\text{F1-score}
&=
2\,\frac{\text{Precision}\times\text{Recall}}
{\text{Precision}+\text{Recall}}.
\end{align*}

Although the jump intensity parameter $\lambda$ is fixed in simulation,
the realized number of jumps varies across trajectories. Consequently,
evaluation based on the theoretical parameters may be misleading in
finite samples. We therefore compare estimators with the
\emph{realized} jump characteristics. Define the realized jump mean and
intensity as
\begin{align}
\tilde{\mu}_J
=
\frac{1}{|\mathcal J_n|}
\sum_{i\in \mathcal J_n} \Delta_i^n J,
\qquad
\tilde{\lambda}
=
\frac{|\mathcal J_n|}{n}.
\end{align}
The corresponding estimators based on detected jumps are
\begin{align}
\hat{\mu}_J
=
\frac{1}{|\hat{\mathcal J}_n|}
\sum_{i\in \hat{\mathcal J}_n} \Delta_i^n \hat J,
\qquad
\hat{\lambda}
=
\frac{|\hat{\mathcal J}_n|}{n}.
\end{align}

When both the sampling size $n$ and the jump intensity are moderate,
the number of realized jumps may be small, preventing reliable
convergence toward theoretical parameters. For this reason, estimation
accuracy is assessed relative to realized quantities rather than
population values. We measure the discrepancy via the scaled error
metric
\begin{align*}
d_M
:=
\sqrt{
\left(\frac{\tilde{\mu}_J}{\hat{\mu}_J}-1\right)^2
+
\left(\frac{\tilde{\lambda}}{\hat{\lambda}}-1\right)^2
},
\end{align*}
which jointly evaluates the accuracy of estimated jump magnitude and
frequency.

Accordingly,  detection performance and estimation accuracy are evaluated
jointly through the F1-score and the scaled error metric defined above.
The F1-score takes values in $[0,1]$, with larger values indicating
improved classification performance; in particular, values closer to one
correspond to more accurate identification of jump indices. 
While the
F1-score assesses detection accuracy through correct classification
counts, the metric $d_M$ quantifies the discrepancy between estimated and
realized jump characteristics.
$d_M$ is scale-free and equals zero if and only if the
estimated quantities coincide with their realized counterparts, that is,
$\hat{\mu}_J=\tilde{\mu}_J$ and $\hat{\lambda}=\tilde{\lambda}$.
Consequently, smaller values of $d_M$ indicate greater estimation
accuracy, whereas larger values reflect increasing deviation from the
realized jump structure and hence reduced reliability of both parameter
estimation and detection. In this sense, the F1-score captures
classification performance, while $d_M$ measures parametric divergence,
providing complementary perspectives on the effectiveness of the proposed
procedure.

Figure~\ref{fig:F1_rate} reports the F1-score across the parameter
configurations defined in~\eqref{eqn:parm_set}. The results show a clear
improvement in detection performance as the jump mean increases.
Moving from left to right in the figure, corresponding to larger values
of $\mu_J$, the F1-score increases monotonically and approaches one,
indicating nearly perfect recovery of the jump indices.
A similar improvement is observed when the sample size increases
(top to bottom panels). Larger samples lead to a faster convergence of
the F1-score toward one as the robustness parameter $\alpha$ increases.

For small sample sizes and higher jump intensities (large $\lambda$),
the number of jumps is relatively large compared with the available
observations, which reduces the power to correctly identify all true
jumps. Nevertheless, relative to the case $\alpha=0$ (corresponding to
the OLS estimator), the detection accuracy improves substantially for
$\alpha>0$, demonstrating the advantage of the proposed robust
normalization. These findings are consistent with the theoretical
result in Theorem~\ref{thm:robust_detection_consistency}, which implies
that diffusion increments remain of stochastic order
$\sqrt{\Delta_n}$ while jump-induced increments diverge, yielding
increasing separation as $\Delta_n \to 0$.

A similar pattern is observed in the jump-parameter estimation results
shown in Figure~\ref{fig:metric_conv}. As the separation between the
diffusion increments and the realized jump mean $(\tilde{\mu}_J)$
increases (again moving from left to right), the proposed metric
converges more rapidly toward zero, indicating improved estimation
accuracy. When the jump intensity is very small ($\lambda=1$), the
performance of the OLS estimator is nearly indistinguishable from that
of the robust estimator with $\alpha>0$. In contrast, for higher jump
intensities and small samples, the discrepancy between the estimated
and realized parameters becomes more pronounced. Increasing the sample
size substantially reduces this discrepancy, and the metric declines
more steeply toward zero, reflecting improved estimation precision.

\begin{figure}[H]
    \centering
    \includegraphics[width=0.9\linewidth]{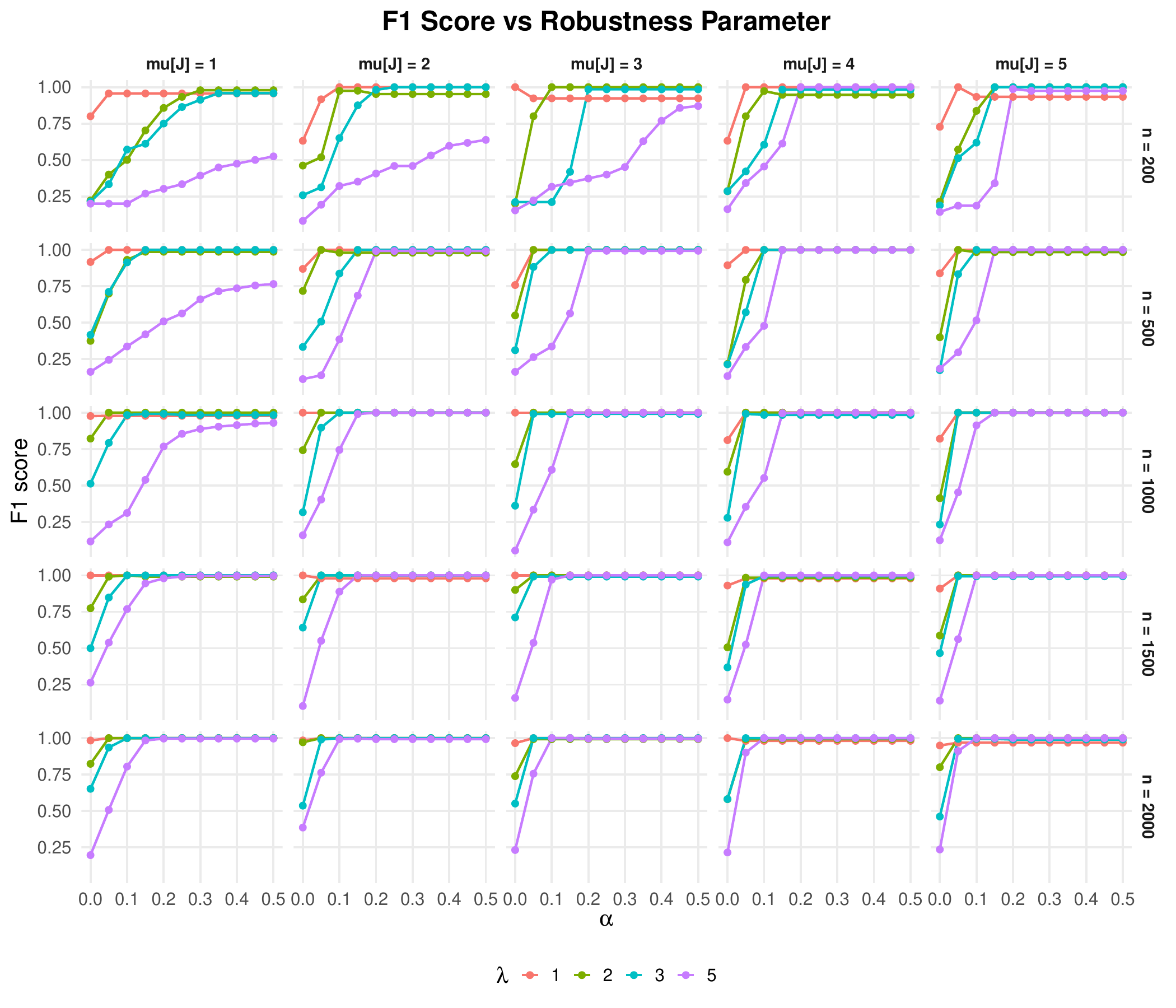}
    \caption{F1 score as a function of the robustness parameter $\alpha$ for different sample sizes ($n$), jump intensities ($\lambda$), and jump mean values ($\mu_J$). Panels are arranged by sample size (rows) and jump mean (columns), while curves correspond to different jump intensities. The figure illustrates how the robustness parameter affects jump detection accuracy and how this effect varies with jump magnitude, jump frequency, and sampling resolution.}
    \label{fig:F1_rate}
\end{figure}
Overall, the simulation results provide empirical support for the
theoretical separation mechanism underlying the proposed detection
procedure. The robust parametric normalization preserves the
asymptotic divergence of jump-induced increments while stabilizing the
estimation of the diffusion scale. As a consequence, the resulting
procedure achieves reliable identification of discontinuities across a
broad range of jump intensities and magnitudes, maintaining stable
performance even under moderate contamination through jumps.

\begin{figure}[H]
    \centering
    \includegraphics[width=0.9\linewidth]{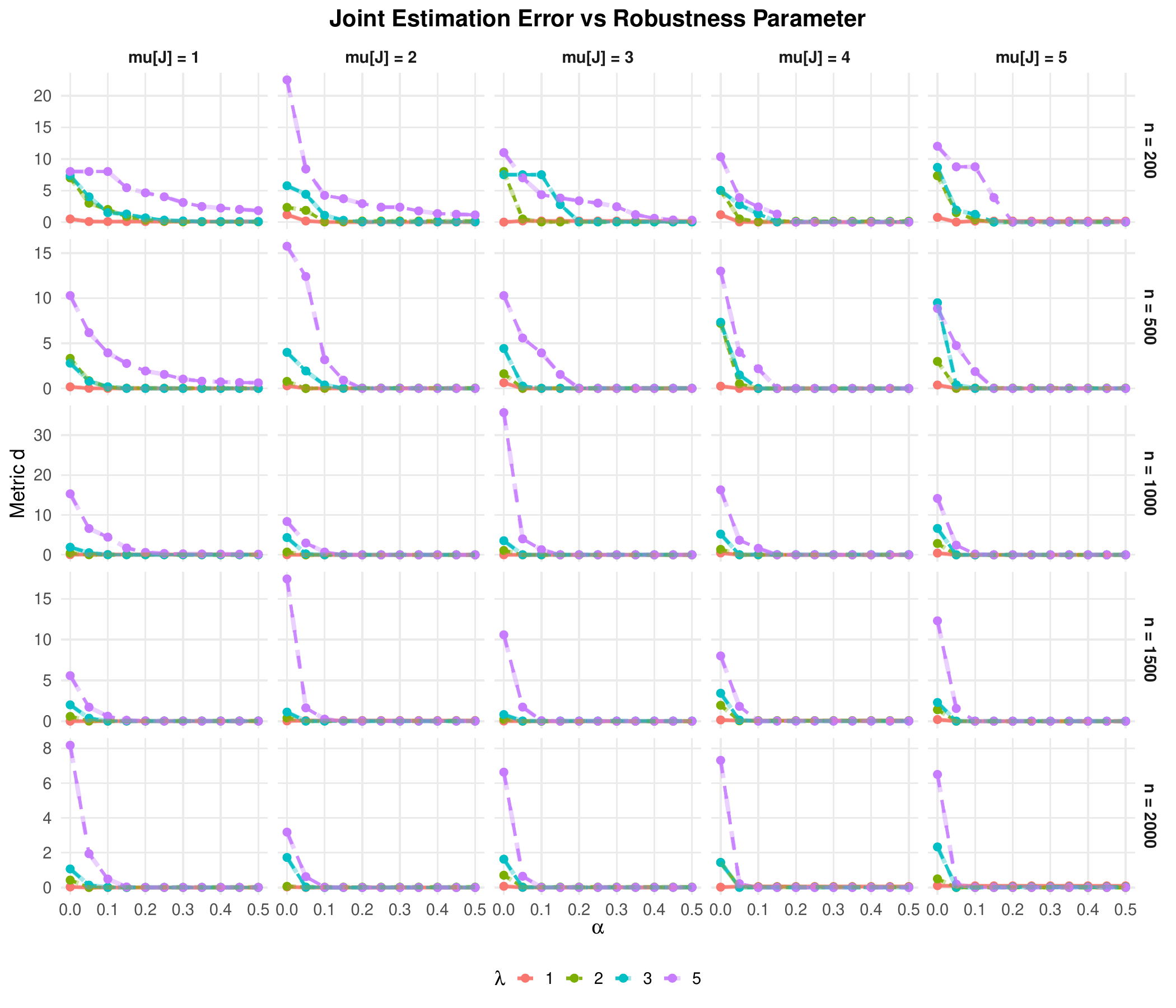}
    \caption{Error metric ($d_M$) as a function of the robustness parameter $\alpha$ under varying sample sizes ($n$), jump intensities ($\lambda$), and jump mean values ($\mu_J$). Panels are arranged by sample size (rows) and jump mean (columns), while curves correspond to different jump intensities. The figure illustrates how the robustness tuning parameter influences the accuracy of jump parameter estimation and how this effect interacts with jump magnitude, jump frequency, and sampling resolution.}
    \label{fig:metric_conv}
\end{figure}




\section{Conclusion}\label{concl}

This paper develops a robust estimation and jump detection framework for discretely observed CKLS jump–diffusion processes under high–frequency asymptotics. The methodology combines a density power divergence–based estimator for drift parameters with an extreme–value–theoretic jump detection rule derived from standardized residuals. The proposed approach is designed to retain asymptotic validity under the diffusion benchmark while mitigating the influence of atypical increments arising from jumps or other departures from Gaussianity.

On the theoretical side, we establish consistency and asymptotic normality of the robust estimator under infill asymptotics and show that the associated residual normalization yields a detection statistic whose extremal behavior permits asymptotically valid separation of continuous and discontinuous components. The analysis demonstrates that robustness regularizes the normalization without altering the asymptotic classification boundary, thereby preserving identifiability of the jump component.

The finite–sample experiments corroborate the theoretical findings. The results show that moderate robustness improves stability of both parameter estimation and jump identification, particularly in regimes where jump magnitudes are small or sample sizes are limited. As the sampling frequency increases, detection performance approaches the asymptotic regime, confirming the theoretical scale separation between diffusion and jump increments.

Overall, the proposed framework provides a unified and theoretically grounded approach to robust inference in jump–diffusion models. It achieves reliable jump detection and stable parameter estimation across a broad range of regimes, while remaining consistent with classical likelihood–based inference under the pure diffusion specification. These results support the use of robustness as a principled mechanism for improving finite–sample performance without compromising asymptotic efficiency.

\section*{Acknowledgments}

The author would like to express special thanks to Prof. Diganta Mukherjee \footnote{Professor, Indian Statistical Institute} for his valuable guidance and insightful suggestions throughout this research. The author is also grateful to Prof. Indranil Sengupta\footnote{Professor, City University of New York} for his assistance and support during the development of this work. Their contributions have been instrumental in shaping the final outcome of this paper.

\printbibliography

\appendix
 \section{Appendix}\label{sec:app}

\setcounter{subsection}{0}
\renewcommand{\thesubsection}{A\arabic{subsection}}
\renewcommand{\theequation}{\thesection.\arabic{subsection}.\arabic{equation}}
\numberwithin{equation}{subsection}

\begin{proof}[Proof of the theorem \ref{thm:parametric_LM}]
Integrating \eqref{eqn:jump_ckls} over $(t_{i-1},t_i]$ yields
\[
\Delta_i^n X
=
\int_{t_{i-1}}^{t_i}
\kappa(\theta-X_s)\,ds
+
\int_{t_{i-1}}^{t_i}
\sigma{X_s^\gamma}\,dW_s
+
\Delta_i^n J.
\]
A first-order It\^o--Taylor expansion gives
\[
\Delta_i^n X
=
\kappa(\theta-X_{t_{i-1}})\Delta_n
+
\sigma{X_{t_{i-1}}^\gamma}\Delta_i^n W
+
\Delta_i^n J
+
O_p(\Delta_n).
\]
Subtracting the estimated drift component and invoking consistency of the parameters yields
\[
\Delta_i^n X
-
\hat\kappa_n(\hat\theta_n-X_{t_{i-1}})\Delta_n
=
\sigma X_{t_{i-1}}^\gamma\Delta_i^n W
+
\Delta_i^n J
+
o_p(\sqrt{\Delta_n}).
\]
Since
\[
\hat\sigma_nX_{t_{i-1}}^\gamma\sqrt{\Delta_n}
=
\sigma X_{t_{i-1}}^\gamma \sqrt{\Delta_n}(1+o_p(1)),
\]
division yields \eqref{eq:LM_expansion}, from which the stated conclusions follow.
\end{proof}
\begin{proof}[Proof of Theorem~\ref{thm:max_LM}]
Let
\[
M_n := \max_{i\in\mathcal C_n} |Z_{i,n}|.
\]
By consistency of the parameter estimators,
\[
Z_{i,n}
=
\frac{\Delta_i^n X - b(X_{t_{i-1}})\Delta_n}
{\sigma(X_{t_{i-1}})\sqrt{\Delta_n}}
+ o_p(1)
=
\xi_i + o_p(1),
\]
uniformly over $i\in\mathcal C_n$, where $\{\xi_i\}$ are independent $N(0,1)$ variables. Since $|\mathcal C_n|/n \xrightarrow{P} 1$, it follows that
\[
M_n
=
\max_{1\le i\le n} |\xi_i|
+ o_p(1).
\]

Thus, for any $u>0$,
\[
\Pr(M_n \le u)
=
\left[2\Phi(u)-1\right]^n + o(1).
\]
Define the normalizing sequences
\[
a_n
=
\sqrt{2\log n}
-
\frac{\log\log n + \log\pi}{2\sqrt{2\log n}},
\qquad
b_n
=
\frac{1}{\sqrt{2\log n}},
\]
and let $u_n(x)=a_n+b_n x$. Using the Gaussian tail expansion
\[
1-\Phi(u)
=
\frac{\phi(u)}{u}(1+o(1)),
\qquad u\to\infty,
\]
we obtain
\[
n\left[1-\Phi(u_n(x))\right]
\to e^{-x}.
\]
Consequently,
\[
\Pr\!\left(
\frac{M_n-a_n}{b_n} \le x
\right)
\to
\exp(-e^{-x}),
\]
which establishes convergence to the standard Gumbel law.

Finally, since $(M_n-a_n)/b_n=O_p(1)$ and $a_n=\sqrt{2\log n}+o(1)$,
\[
M_n
=
\sqrt{2\log n}
+
O_p(1).
\]
\end{proof}

\begin{proof}[Proof of Theorem~\ref{thm:robust_detection_consistency}]
Let
\[
\mathcal C_n = \{i : \Delta_i^n J = 0\},
\qquad
\mathcal J_n = \{i : \Delta_i^n J \neq 0\},
\]
denote the sets of continuous and jump increments, respectively, and define the threshold
\[
\xi_n = a_n + b_n G^{-1}(1-q),
\]
where $a_n,b_n$ are given in Theorem~\ref{thm:max_LM} and $G^{-1}$ is the Gumbel quantile function.

We first control false detections. By Theorem~\ref{thm:max_LM},
\[
\frac{\max_{i\in\mathcal C_n}|Z_{i,n}| - a_n}{b_n}
\Rightarrow \Lambda,
\]
where $\Lambda$ has the standard Gumbel distribution. Hence,
\[
\Pr\!\left(\max_{i\in\mathcal C_n}|Z_{i,n}| > \xi_n\right)
=
\Pr\!\left(\Lambda > G^{-1}(1-q)\right) + o(1)
=
q + o(1).
\]
Since $q>0$ may be chosen arbitrarily small, false detections vanish asymptotically.

Next, consider detection under the alternative. For any fixed $i\in\mathcal J_n$, Theorem~\ref{thm:parametric_LM} yields
\[
|Z_{i,n}| \xrightarrow{P} \infty.
\]
Since $\xi_n = O(\sqrt{\log n})$, it follows that
\[
\Pr(|Z_{i,n}| \le \xi_n) \to 0,
\]
so each jump is detected with probability tending to one.

Finally, define the estimated jump set
\[
\widehat{\mathcal J}_n = \{i : |Z_{i,n}| > \xi_n\}.
\]
The preceding bounds imply
\[
\Pr\!\left(\widehat{\mathcal J}_n = \mathcal J_n\right) \to 1,
\]
which establishes consistency of the classification.
\end{proof}


\clearpage
\setcounter{section}{0}
\setcounter{equation}{0}
\setcounter{figure}{0}
\setcounter{table}{0}

\renewcommand{\thesection}{S.\arabic{section}}
\renewcommand{\thesubsection}{S.\arabic{section}.\arabic{subsection}}

\renewcommand{\theequation}{S.\arabic{equation}}

\renewcommand{\thefigure}{S.\arabic{figure}}
\renewcommand{\thetable}{S.\arabic{table}}


\end{document}
\section*{Supplementary Material}
\include{table_tex_files/jump_table_n200_lambda1}
\include{table_tex_files/jump_table_n200_lambda2}
\include{table_tex_files/jump_table_n200_lambda3}
\include{table_tex_files/jump_table_n200_lambda5}

\include{table_tex_files/jump_table_n500_lambda1}
\include{table_tex_files/jump_table_n500_lambda2}
\include{table_tex_files/jump_table_n500_lambda3}
\include{table_tex_files/jump_table_n500_lambda5}

\include{table_tex_files/jump_table_n1000_lambda1}
\include{table_tex_files/jump_table_n1000_lambda2}
\include{table_tex_files/jump_table_n1000_lambda3}
\include{table_tex_files/jump_table_n1000_lambda5}

\include{table_tex_files/jump_table_n1500_lambda1}
\include{table_tex_files/jump_table_n1500_lambda2}
\include{table_tex_files/jump_table_n1500_lambda3}
\include{table_tex_files/jump_table_n1500_lambda5}

\include{table_tex_files/jump_table_n2000_lambda1}
\include{table_tex_files/jump_table_n2000_lambda2}
\include{table_tex_files/jump_table_n2000_lambda3}
\include{table_tex_files/jump_table_n2000_lambda5}

\end{document}

https://archive.org/details/g.f.simmonsdifferentialequations/page/n7/mode/2up